\RequirePackage{ifpdf}
\ifpdf
\documentclass[a4paper,12pt,twoside]{article}
\usepackage{microtype}
\else
\documentclass[a4paper,12pt,twoside,dvips]{article}
\fi

\usepackage[scale=0.85,includeheadfoot]{geometry}
\usepackage[medium]{titlesec}

\usepackage[utf8x]{inputenc}
\usepackage[english]{babel}

\usepackage[T1]{fontenc}

\usepackage{amsmath,amsfonts,amsthm, mathtools}
\usepackage{subeqnarray}
\usepackage{latexsym}

\usepackage{mathptmx}

\usepackage{parskip}
\usepackage{fancyhdr}
\pagestyle{fancy}

\usepackage{graphicx}
\usepackage{hyperref}

\usepackage{tikz}

\graphicspath{{pics/}}

\newcommand{\R}{{\mathbb R}}
\newcommand{\K}{{\mathbb K}}

\newcommand{\dd}{{\rm d}}
\newcommand{\Id}{  1 \!\!\! \mathrm I}
\newcommand{\solve}{{\rm solve \;}}
\newcommand{\qr}{{\rm qr \;}}
\newcommand{\eig}{{\rm eig \;}}
\newcommand{\tr}{{\rm tr \;}}

\newcommand{\eqE}{ \stackrel{E}{=}}
\newcommand{\eqD}{ \stackrel{D}{=}}
\newcommand{\eqDE}{ \stackrel{D+E}{=}}
\newcommand{\pfe}[1]{ \lbrack #1 \rbrack_E}
\newcommand{\pfd}[1]{ \lbrack #1 \rbrack_D}
\newcommand{\pfde}[1]{ \lbrack #1 \rbrack_{D+E}}

\newcommand{\textred}[1]{#1}
\newcommand{\textblue}[1]{#1}
\newcommand{\textgreen}[1]{#1}

\newcommand{\pb}[1]{ \overleftarrow{P}{(#1)} }
\newcommand{\pf}[1]{ \overrightarrow{P}{(#1)} }

\newcommand{\add}{\mathrm{add}\;}

\newcommand{\mul}{\mathrm{mul}\;}

\newcommand{\Eqn}[1]{Eqn. (\ref{#1})}

\newcommand{\QDE}{\textred{[Q]_{D+E}}}
\newcommand{\QTDE}{\textred{[Q^T]_{D+E}}}
\newcommand{\QD}{\textred{[Q]_D}}
\newcommand{\QTD}{\textred{[Q^T]_D}}
\newcommand{\QE}{\textred{[Q]_E}}
\newcommand{\QTE}{\textred{[Q^T]_E}}

\newcommand{\LDE}{\textred{[\Lambda]_{D+E}}}
\newcommand{\LE}{\textred{[\Lambda]_E}}
\newcommand{\LD}{\textred{[\Lambda]_D}}

\newcommand{\ADE}{\textblue{[A]_{D+E}}}
\newcommand{\AD}{\textblue{[A]_D}}

\newcommand{\DAE}{\textblue{[\Delta A]_E}}
\newcommand{\DFE}{[\Delta F]_E}
\newcommand{\DGE}{[\Delta G]_E}
\newcommand{\DLE}{\textgreen{[\Delta \Lambda]_E}}
\newcommand{\DQE}{\textgreen{[\Delta Q]_E}}

\theoremstyle{plain}
\newtheorem{proposition}{Proposition}

\newtheorem{lemma}[proposition]{Lemma}

\theoremstyle{definition}

\newtheorem{algorithm}[proposition]{Algorithm}

\theoremstyle{remark}

\setlength{\itemsep}{0pt plus 4pt}

\title{Algorithmic Differentiation of Linear Algebra Functions with Application in Optimum Experimental Design (Extended Version)}

\author{Sebastian~F. Walter\footnote{\texttt{sebastian.walter@gmail.com}}
 \and Lutz Lehmann\footnote{\texttt{llehmann@mathematik.hu-berlin.de}}}

\begin{document}
\maketitle

\begin{abstract}
We derive algorithms for higher order derivative computation of the rectangular $QR$ and eigenvalue decomposition of symmetric matrices with distinct eigenvalues in the forward and reverse mode of algorithmic differentiation (AD) using univariate Taylor propagation of matrices (UTPM). Linear algebra functions are regarded as elementary functions and not as algorithms.
The presented algorithms are implemented in the BSD licensed AD tool \texttt{ALGOPY}. Numerical tests show that the UTPM algorithms derived in this paper produce results close to machine precision accuracy. The theory developed in this paper is applied to compute the gradient of an objective function motivated from optimum experimental design: $\nabla_x \Phi(C(J(F(x,y))))$, where $\Phi = \{\lambda_1 : \lambda_1 \mbox{largest eigenvalue of }C\}$, $C = (J^T J)^{-1}$, $J = \frac{\dd F}{\dd y}$ and $F = F(x,y)$.
\end{abstract}

\tableofcontents

\section{Introduction}
\label{intro}
The theory of \emph{Algorithmic Differentiation} (AD) is concerned with the automated generation of efficient algorithms for derivative computation of computational models. A \emph{computational model} (CM) is the description of a mathematically expressed (scientific) problem as a computer program. That means that the CM is a composite function of elementary functions. From a mathematical point of view, only the operations $*,+$ together with their inverse operations $/,-$ are elementary functions since they are required to define the field of real numbers $\R$. However, there are good reasons to include other functions, e.g. those defined in the C-header \texttt{math.h}. The reason is firstly because algorithmic implementations of functions as \texttt{exp,sin} may contain non-differentiable computations and branches, furthermore one can use the additional structure to derive more efficient algorithms. E.g. to compute the univariate Taylor propagation of $\sin (\sum_{d=0}^{D-1} x_d t^d)$ can be done in $\mathcal O(D^2)$ arithmetic operations by using the structural information that they are solutions of special ordinary differential equations (c.f. \cite{Griewank2008EDP}). Even of higher practical importance is the fact that deriving explicit formulas for functions as the trigonometric functions reduces the memory requirement of the reverse mode of AD to $\mathcal O(D)$ since the intermediate steps do not have to be stored. That means, explicitly deriving derivative formulas for functions can yield much better performance and smaller memory requirements.
In scientific computing, there are many functions that exhibit rich structural information. Among those, the linear algebra functions as they are for example implemented in LAPACK are of central importance. This motivates the authors' efforts to treat linear algebra routines as elementary functions.

\section{Related Work}
Computing derivatives in the forward mode of AD can be done by propagating polynomial factor rings through a program's computational graph. In the past, choices have been univariate Taylor polynomials of the form $[x]_D = \sum_{d=0}^{D-1} x_d t^D$, where $x_d \in \R$ as described in the standard book ``Evaluating Derivatives: Principles and Techniques of Algorithmic Differentiation'' by Griewank \cite{Griewank2008EDP} and implemented e.g. in the AD tool ADOL-C \cite{Griewank1999ACA}. Also, multivariate Taylor polynomials of the form $[x]_D = \sum_{|i| \leq D-1} x_i t^i$, where $i$ is a multi-index and $x_i \in \R$, have been successfully used, e.g. for high-order polynomial approximations \cite{har08,Neidinger1992AEM}.  Univariate Taylor polynomials over matrices have also been considered, i.e. $\pfd{A} = \sum_{d=0}^{D-1} A_d t^d $ of fixed degree $D-1$ with matrix valued coefficients $ A_d\in \R^{M\times N}$ for $ d=0,\dots,D-1$. Very close to our work is Eric Phipps' PhD thesis \cite{ETP03}. Phipps used the combined forward and reverse mode of AD for the linear algebra routines $Ax = b, A^{-1}, C = AB, C = A \circ B$ in the context of a Taylor Series integrator for differential algebraic equations. A paper by Vetter in 1973  is also treating matrix differentials and the combination with matrix Taylor expansions \cite{vetter73}. However, the focus on the paper is on results of matrix derivatives of the form $D_B A(B) \in \R^{M_A M_B \times N_A N_B}$, where $A \in \R^{M_A, N_A}, B \in \R^{M_B, N_B}$ and the derivative computation is not put into the context of a computational procedure.

An alternative to the Taylor propagation approach is to transform the computational graph to obtain a computational procedure that computes derivatives. Higher order derivatives are computed by successively applying these transformations. There is a comprehensive and concise reference for first order matrix derivatives in the forward and reverse mode of AD by Giles \cite{Giles2008CMD}. More sophisticated differentiated linear algebra algorithms are collected in an extended preprint version \cite{Giles2007CMDextended}. The eigenvalue decomposition algorithm derived by Giles is a special case of the algorithm presented in this paper.

\section{Mathematical Description}
\subsection{Notation for Composite Functions}
Typically, in the framework of AD one considers functions $F: \R^N \rightarrow \R^M$, $x \mapsto y = F(x)$ that are built of elementary functions $\phi$.
where $x \equiv (x_1,\dots, x_N),\; y \equiv (y_1,\dots, y_M)$ with $x_n, y_m \in \R$ for $1\leq n \leq N, 1 \leq m \leq M$. Instead, we look at functions
\begin{eqnarray}
F : \bigoplus_{n=1}^N \K_n  &\rightarrow& \bigoplus_{m=1}^M \K_m \\
(x_1,\dots,x_N) &\mapsto& (y_1,\dots, y_M) = F(x_1,\dots, x_N) \;,
\end{eqnarray}
where $\K$ is some ring. Here $\K = (\R^{M \times N}, +, \cdot)$ where $+,\cdot$ the usual matrix matrix addition and multiplication..
If $F$ maps to $\R^{1 \times 1}$ we use the symbol $f$ instead of $F$.
For example $f(x,y) = \tr ( xy + x)$ where $x \in \R^{N, N}, y \in \R^{N,N}$ can be written as
\begin{equation*}
f(x,y) = \phi_4( \phi_3( \phi_1(x,y), \phi_2(x)))  = \phi_4 (\phi_3(v_1,v_2)) = \phi_4 ( v_3).
\end{equation*}
We use the notation $v_l$ for the result of $\phi_l$ and $v_{j \prec l}$ for all arguments of $\phi_l$. To be consistent the independent input arguments $v_n$ are also written as $v_{n-N} = x_n$. To sum it up, the following three equations describe the function evaluation:
\begin{eqnarray}
v_{n-N} & = & x_n \quad\quad\quad n=1,\dots,N\\
v_{l} & = & \phi_l(v_{j \prec l}) \quad l =1,\dots, L\\
y_{M-m} & = & v_{L-m} \quad\quad m = M-1, \dots, 0 \;,
\end{eqnarray}
where $L$ is the number of calls to basics functions $\phi_l$ during the computation of $F$.

\subsection{The Push Forward}
We want to lift the computational procedure to work on the polynomial factor ring $ \K[t] / t^D \K[t]$ with representatives $[x]_D := \sum_{d=0}^{D-1} x_d t^d$.
We define the \emph{push forward} of a sufficiently smooth function in the following way:
\begin{eqnarray}
\pf{f} ([x]_D) &:=& \sum_{d=0}^{D-1} \frac{1}{d!} \left. \frac{\dd^d}{\dd t^d} f (   \sum_{c=0}^{D-1} x_c t^c) \right|_{t=0} t^d \;.
\end{eqnarray}
For a function $y = f(x)$ we use the notation $ [y]_D = \pf{f}([x]_D)$.
The definition of the push forward induces the usual addition and multiplication of ring elements
\begin{eqnarray}
\pf{\mul}([x]_D, [y]_D) &=&  \sum_{d=0}^{D-1} \frac{1}{d!} \left.  \frac{\dd^d}{\dd t^d}   (\sum_{c=0}^{D-1} x_c t^c)(\sum_{k=0}^{D-1} y_k t^k) \right|_{t=0} t^d \;,\\
\pf{\add}([x]_D, [y]_D) &=&  \sum_{d=0}^{D-1} \frac{1}{d!} \left.  \frac{\dd^d}{\dd t^d}   (\sum_{c=0}^{D-1} x_c t^c)+(\sum_{k=0}^{D-1} y_k t^k) \right|_{t=0} t^d \;.
\end{eqnarray}

We now look at the properties of this definition:
\begin{proposition}
For $f: \R^K \rightarrow \R^M$ and $g: \R^N \rightarrow \R^K$ sufficiently smooth functions we have 
\begin{eqnarray}
\pf{f \circ g} = \pf{f} \circ \pf{g} \;.
\end{eqnarray}
\end{proposition}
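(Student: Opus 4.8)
The plan is to recognize the push forward $\pf{f}$ as nothing more than the operator that evaluates $f$ along the polynomial curve $x(t) := \sum_{c=0}^{D-1} x_c t^c$ and then truncates the resulting Taylor expansion at degree $D-1$. Writing $\tau_D$ for this truncation, i.e. the map sending a sufficiently smooth vector-valued function $p(t)$ to $\sum_{d=0}^{D-1}\frac{1}{d!}p^{(d)}(0)\,t^d$, the definition reads $\pf{f}([x]_D)=\tau_D\big(f(x(\cdot))\big)$. With this reformulation both sides of the claimed identity become truncated Taylor polynomials, and the task reduces to comparing the curves that are fed into the outer function $f$.

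Concretely, I would apply both sides to the same input $[x]_D$ and expand. For the left-hand side I set $\tilde y(t) := g(x(t))$, the genuine image curve, so that $\pf{f\circ g}([x]_D)=\tau_D\big(f(\tilde y(\cdot))\big)$. For the right-hand side, $\pf{g}([x]_D)=\tau_D(\tilde y(\cdot))=[y]_D$ produces the \emph{truncated} polynomial $y(t)=\sum_{d=0}^{D-1}y_d t^d$, after which $(\pf{f}\circ\pf{g})([x]_D)=\tau_D\big(f(y(\cdot))\big)$. Thus everything hinges on the single equality $\tau_D\big(f(\tilde y(\cdot))\big)=\tau_D\big(f(y(\cdot))\big)$, that is, on showing that replacing the full inner curve $\tilde y$ by its degree-$(D-1)$ truncation $y$ does not disturb the first $D$ Taylor coefficients of the outer composition.

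This last step is the heart of the argument, and it is precisely where the smoothness hypothesis on $f$ is used. By construction $\tilde y(t)-y(t)=O(t^D)$, so writing $\tilde y=y+r$ with $r(t)=O(t^D)$ and inserting this into a Taylor expansion of $f$ about $y(t)$ gives $f(\tilde y(t))-f(y(t))=Df(y(t))\,r(t)+O(\|r(t)\|^2)$, and every term on the right is again $O(t^D)$. Hence $f(\tilde y(\cdot))$ and $f(y(\cdot))$ share all Taylor coefficients up to order $D-1$, so their truncations coincide and the identity follows. I expect the only genuine care to lie in stating this ``order-of-contact is preserved by smooth maps'' fact cleanly for vector-valued curves and in checking that the remainder really vanishes to order $D$; once that is secured the two displayed truncations are literally equal. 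As a concluding remark one may note the equivalent algebraic viewpoint: $\tau_D$ is a ring homomorphism onto $\K[t]/t^D\K[t]$, so the proposition simply expresses that the push forward respects composition on these truncated curves.
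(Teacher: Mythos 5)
Your proof is correct and takes essentially the same route as the paper's: both reduce the claim to the observation that replacing the inner curve $g(x(t))$ by its degree-$(D-1)$ truncation changes the argument of $f$ only by $\mathcal{O}(t^D)$, which smoothness of $f$ converts into an $\mathcal{O}(t^D)$ perturbation of the composition, leaving the first $D$ Taylor coefficients unchanged. The only difference is cosmetic: you make explicit, via the first-order Taylor remainder $f(\tilde y)-f(y)=Df(y)\,r+\mathcal{O}(\|r\|^2)$, the order-of-contact step that the paper simply asserts when it writes $f\bigl(\sum_{c} g_c t^c\bigr)+\mathcal{O}(t^D)$ in its third displayed line.
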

\begin{proof}
Let $[x]_D \in \R^N[t]/t^D \R^N[t]$, then 
\begin{eqnarray*}
\pf{f \circ g} ([x]_D) &=&  \sum_{d=0}^{D-1} \frac{1}{d!} \left. \frac{\dd^d}{\dd t^d} (f \circ g) (   \sum_{c=0}^{D-1} x_c t^c) \right|_{t=0} t^d \\
&=&  \sum_{d=0}^{D-1} \frac{1}{d!} \left. \frac{\dd^d}{\dd t^d} (f (g (   \sum_{c=0}^{D-1} x_c t^c)) \right|_{t=0} t^d \\
&=&  \sum_{d=0}^{D-1} \frac{1}{d!} \left. \frac{\dd^d}{\dd t^d} (f ( \sum_{c=0}^{D-1} g_c t^c) + \mathcal O(t^D)) \right|_{t=0} t^d \\
&=&  \sum_{d=0}^{D-1} \frac{1}{d!} \left. \frac{\dd^d}{\dd t^d} (f ( \sum_{c=0}^{D-1} g_c t^c)) \right|_{t=0} t^d \\
&=& \left( \pf{f} \circ \pf{g} \right) ( [x]_D) \;.
\end{eqnarray*}
\end{proof}
This proposition is of central importance because it allows us to differentiate any composite function $F$ by providing implementations for the push forwards $P(\phi)$ of a fixed set of elementary functions $\phi$.

\subsection{The Pullback}
The differential $\dd f$ of a function $f: \R^N \rightarrow \R^M, x \mapsto y = f(x)$ is a linear map between the tangent spaces, i.e. $\dd f(x): T_x \R^N \mapsto T_y \R^M$. An element of the cotangent bundle $T^* \R^M$ can be written as
\begin{eqnarray}
\alpha(\bar y,y) := \sum_{m = 1}^M \bar y_m \dd y^m \;,
\end{eqnarray}
i.e. $\alpha(\bar y,y)$ maps any element of $T_y \R^M$ to $\R$. 
\begin{eqnarray}
\alpha( \bar y , y) &=& \sum_{m=1}^M \bar y_m \dd y^m = \sum_{m=1}^M \bar y_m \dd f^m(x) =  \sum_{m=1}^M \bar y_m \sum_{n=1}^N \frac{\partial f^m}{\partial x^n}\dd x^n \\
&=&  \sum_{n=1}^N \sum_{m=1}^M \bar y_m  \frac{\partial y^m}{\partial x^n} \dd x^n = \sum_{n=1}^N \bar x_n \dd x^n = \alpha(\bar x,x) \;.
\end{eqnarray}
A \emph{pullback} of a composite function $f \circ g$ is defined as $\pb{f} := f \circ g$. To keep the notation simple we often use $f \equiv \pb{f}$.

\subsection{The Pullback of Lifted Functions}
We want to lift the function $\alpha$. Due to Proposition \ref{prop:lifted_pullback} we are allowed to decompose the global pullback to a sequence of pullbacks of elementary functions.
\begin{lemma} \label{lemma:pfd_dpf}
\begin{eqnarray}
 \dd (\pf{f}) = \pf{\dd f} \; 
\end{eqnarray}
\end{lemma}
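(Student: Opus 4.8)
The plan is to read both sides as maps on the truncated polynomial ring $\R^N[t]/t^D\R^N[t]$, which is a finite-dimensional real vector space with coordinates given by the coefficients $x_0,\dots,x_{D-1}$. Since $\pf{f}$ is a smooth map between two such spaces, its differential $\dd(\pf{f})$ at a point $[x]_D$ may be computed as the directional (G\^ateaux) derivative along a tangent element $[\dot x]_D = \sum_{c=0}^{D-1}\dot x_c t^c$, namely
\begin{equation*}
\dd(\pf{f})([x]_D)\,[\dot x]_D = \left.\frac{\dd}{\dd s}\, \pf{f}\big([x]_D + s[\dot x]_D\big)\right|_{s=0}.
\end{equation*}

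First I would substitute the defining formula for the push forward, writing $\pf{f}([x]_D + s[\dot x]_D) = \sum_{d=0}^{D-1}\frac{1}{d!}\left.\frac{\dd^d}{\dd t^d} f\big(\sum_c (x_c + s\dot x_c)t^c\big)\right|_{t=0} t^d$. The central step is then to interchange the outer $s$-derivative with the finite sum over $d$, with the $d$-fold $t$-derivative, and with the evaluation at $t=0$; all of these commute because $f$ is sufficiently smooth and every operation is finite, so equality of mixed partials (Schwarz) applies. This moves $\tfrac{\dd}{\dd s}\big|_{s=0}$ directly onto the argument of $f$. By the chain rule, $\left.\frac{\dd}{\dd s} f\big(\sum_c (x_c + s\dot x_c)t^c\big)\right|_{s=0}$ is exactly the directional derivative of $f$ at the Taylor argument $\sum_c x_c t^c$ in the direction $\sum_c \dot x_c t^c$, i.e. $\dd f\big(\sum_c x_c t^c;\, \sum_c \dot x_c t^c\big)$.

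Substituting this back, I would obtain $\dd(\pf{f})([x]_D)[\dot x]_D = \sum_{d=0}^{D-1}\frac{1}{d!}\left.\frac{\dd^d}{\dd t^d}\dd f\big(\sum_c x_c t^c;\,\sum_c \dot x_c t^c\big)\right|_{t=0} t^d$, which is precisely $\pf{\dd f}$ evaluated on the lifted base point and the lifted tangent. Reading off the coefficient of each $t^d$ then yields the claimed identity $\dd(\pf{f}) = \pf{\dd f}$.

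The step I expect to be the main obstacle is not the interchange of derivatives itself, which is routine given smoothness, but fixing the bookkeeping so that the two sides are genuinely the same object. One must regard $\dd f$ as a function of the two arguments $(x;\dot x)$, linear in the second, and its push forward must lift \emph{both} arguments simultaneously in the variable $t$. Once this two-argument interpretation is pinned down, so that the base-point component of the lifted tangent map reproduces $\pf{f}$ while the directional component reproduces the differential, the commuting-derivatives computation closes the proof.
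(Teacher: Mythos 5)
Your proof is correct and takes essentially the same approach as the paper: the paper's entire proof is the remark that the differential operators $\frac{\dd}{\dd t}$ and $\dd$ interchange, and your argument is precisely a careful implementation of that interchange, realizing $\dd$ as a G\^ateaux derivative in an auxiliary parameter $s$ and commuting it past the $t$-derivatives and the evaluation at $t=0$ via smoothness. Your closing observation --- that $\pf{\dd f}$ must be read as lifting both the base point and the tangent direction simultaneously --- correctly pins down the two-argument interpretation that the paper leaves implicit and that its later use in Proposition \ref{prop:lifted_pullback} relies on.
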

\begin{proof}
Follows from the fact that the differential operators $\frac{\dd}{\dd t}$ and $\dd$ interchange.
\end{proof}
\begin{proposition} \label{prop:lifted_pullback}
Let $v_l = \phi_l(v_{j \prec l})$ and $\phi_l$ some elementary function. $v_{j \prec l}$ are all arguments $v_j$ of $\phi_l$ (c.f. \cite{Griewank2008EDP}) .Then we have
\begin{eqnarray}
\pf{\alpha}( [\bar v_l], [v_l]) &=& \sum_{j \prec l} \pf{\alpha}( [\bar v_j], [v_j]) \;.
\end{eqnarray}
\end{proposition}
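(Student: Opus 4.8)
The plan is to reduce the lifted identity to the unlifted cotangent invariance already displayed above. For a single elementary function $v_l = \phi_l(v_{j\prec l})$ the chain rule $\dd v_l = \sum_{j\prec l}\frac{\partial\phi_l}{\partial v_j}\,\dd v_j$, together with the reverse-mode adjoint definition $\bar v_j := \bar v_l\,\frac{\partial\phi_l}{\partial v_j}$, already turns $\alpha(\bar v_l, v_l)$ into $\sum_{j\prec l}\alpha(\bar v_j, v_j)$; the task is therefore to show that applying the push forward $\pf{\cdot}$ to every object preserves this identity verbatim on the Taylor ring.

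First I would expand $\pf{\alpha}([\bar v_l],[v_l])$ through the definition of $\alpha$, so that it is written in terms of the lifted differential $\dd[v_l]$. Since $v_l = \phi_l(v_{j\prec l})$, the composition property of the push forward established above ($\pf{f\circ g}=\pf{f}\circ\pf{g}$) gives $[v_l]=\pf{\phi_l}([v_{j\prec l}])$, whence $\dd[v_l]=\dd\bigl(\pf{\phi_l}([v_{j\prec l}])\bigr)$.

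Next I would invoke Lemma~\ref{lemma:pfd_dpf} to interchange the two differential operators, writing $\dd\,\pf{\phi_l}=\pf{\dd\phi_l}$. Substituting the chain-rule expansion $\dd\phi_l=\sum_{j\prec l}\frac{\partial\phi_l}{\partial v_j}\,\dd v_j$ and using that the push forward is $\R$-linear and respects products coefficientwise in $t$ yields $\dd[v_l]=\sum_{j\prec l}\pf{\tfrac{\partial\phi_l}{\partial v_j}}\,\dd[v_j]$. Inserting this back and regrouping the Taylor factors lets me read off the lifted adjoint $[\bar v_j]:=[\bar v_l]\,\pf{\tfrac{\partial\phi_l}{\partial v_j}}$, after which the expression collapses to $\sum_{j\prec l}\pf{\alpha}([\bar v_j],[v_j])$, which is exactly the claim.

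The step I expect to be the genuine obstacle is making $\pf{\alpha}$ well defined in the first place: $\alpha$ is a one-form on the cotangent bundle rather than an ordinary map $\R^K\to\R^M$, so before manipulating anything I must fix the precise meaning of its push forward and verify that $\pf{\cdot}$ distributes over the $\alpha$-pairing and over the products $\frac{\partial\phi_l}{\partial v_j}\,\dd v_j$ degree by degree in the Taylor variable $t$. Once that compatibility is secured, Lemma~\ref{lemma:pfd_dpf} carries the essential weight by letting $\dd$ commute past $\pf{\cdot}$, and the remainder is the same bookkeeping as the scalar reverse-mode chain rule, now performed on Taylor coefficients.
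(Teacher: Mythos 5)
Your proposal is correct and follows essentially the same route as the paper's own proof: both unfold $\pf{\alpha}([\bar v_l],[v_l])$ via the definition of $\alpha$, use Lemma~\ref{lemma:pfd_dpf} to commute $\dd$ past the push forward, expand $\pf{\dd v_l}$ by the chain rule into $\sum_{j\prec l}\pf{\tfrac{\partial\phi_l}{\partial v_j}}\,\dd[v_j]$, and read off the lifted adjoints $[\bar v_j]=[\bar v_l]\,\pf{\tfrac{\partial\phi_l}{\partial v_j}}$. Your closing remark that $\pf{\alpha}$ must first be given a precise meaning is a fair observation about a gap the paper itself leaves implicit, but it does not change the argument.
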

\begin{proof}
\begin{eqnarray*}
\pf{\alpha} ([\bar v_l], [v_l]) &=&  [\bar v_l] \dd \pf{ v_l([v_{j \prec l}])} \\
&=& [\bar v_l]_D \pf{ \dd  v_l}([v_{j \prec l}])  \\
&=& \sum_{j \prec l} [\bar v_l]_D \pf{\frac{\partial \phi_l}{\partial v_j}}([v_{j \prec l}]) \pf{\dd [v_j]} \\
&=&  \sum_{j \prec l} [\bar v_j] \dd [v_j] \\
&=& \sum_{j \prec l} \pf{\alpha}( [\bar v_j],[v_j])
\end{eqnarray*}
\end{proof}
These propositions tell us that we can use the reverse mode of AD on lifted functions by first computing a push forward and then go reverse step by step where algorithms for $\dd \phi$ must be provided that work on $\R[t]/t \R[t]$. It is necessary to store $\pf{\frac{\partial \phi_l}{\partial v_j}}([v_{j \prec l}])$ or $[v_{j \prec l}]$ during the forward evaluation. From the sum $\sum_{j \prec l}$ one can see that the pullback of a function $\phi_l$ is local, i.e. does not require information of any other $v_i$. In the context of a computational procedure one obtains
\begin{eqnarray}
[\bar v_j] = \sum_{j \prec k} [\bar v_k] \pf{\frac{\partial \phi_k}{\partial v_j}}( [v_{i \prec k}]) \;.
\end{eqnarray}

\subsection{Univariate Taylor Propagation of Matrices}
Now that we have introduced the AD machinery, we look at the two possibilities to differentiate linear algebra (LA) functions. One can regard LA functions as algorithms. Formally, this approach can be written as
\begin{eqnarray}
\left[
\begin{matrix}
[Y_{11}] &\dots& [ Y_{1M_Y}] \\
\vdots & \ddots & \vdots \\
[Y_{N_Y1}] & \dots &[Y_{N_Y M_Y}] \\ 
\end{matrix}
\right] &=&
\pf{F}
\left(
\left[
\begin{matrix}
[X_{11}] &\dots& [X_{1 M_X}] \\
\vdots & \ddots & \vdots \\
[X_{ N_X 1}] & \dots &[X_{N_X M_X}] \\ 
\end{matrix}
\right] \right)\;,
\end{eqnarray}
I.e. the function $F$ is given a matrix with elements $[X_{nm}] \in \R[t]/t \R[t]$.
A simple reformulation transforms such a matrix into a polynomial factor ring over matrices $\K[t]/ t \K[t]$, $\K = \R^{N \times M}$:
\begin{eqnarray}  \label{eqn:utpm}
\left[
\begin{matrix}
\sum_{d=0}^D X_d^{11} t^d &\dots& \sum_{d=0}^D X_d^{1M} t^d \\
\vdots & \ddots & \vdots \\
\sum_{d=0}^D X_d^{N1} t^d & \dots &\sum_{d=0}^D  X_d^{NM} t^d \\ 
\end{matrix}
\right]
&=&
\sum_{d=0}^D
\left[
\begin{matrix}
X_d^{11} & \dots &X_d^{1M}\\
\vdots & \ddots & \vdots \\
X_d^{N1} & \dots & X_d^{NM}\\
\end{matrix}
\right]
t^D \;. 
\end{eqnarray}
We denote from now on matrix polynomials as the rhs of Eqn. (\ref{eqn:utpm}) as $[X]$. The formal procedure then reads
\begin{equation}
[Y] = \pf{F}([X])\;,
\end{equation}
where $\pf{F}$ must be provided as an algorithm on $\K$.

\subsection{Pullback of Matrix Valued Functions}
Applying the reverse mode to a function $f: \R^{N \times M} \rightarrow \R, X \mapsto y = f(x)$ yields
\begin{eqnarray}
\bar y \dd f(X) &=& \sum_{n,m} \bar y \frac{ \partial f}{\partial X_{nm}} \dd X_{nm} \\
&=& \tr \left( 
\underbrace{ \bar f
\left[
\begin{matrix}
 \frac{ \partial f}{\partial X_{11}} & \dots & \frac{ \partial f}{\partial X_{1N}} \\
\vdots & \ddots & \vdots \\
 \frac{ \partial f}{\partial X_{M1}} & \dots & \frac{ \partial f}{\partial X_{MN}} \\
\end{matrix}
\right]
}_{=: \bar X^T \in \R^{M \times N}}
\underbrace{
\left[
\begin{matrix}
 \dd X_{11} & \dots &  \dd X_{1M} \\
\vdots & \ddots & \vdots \\
 \dd X_{N1}& \dots & \dd X_{NM} \\
\end{matrix}
\right]
}_{=: \dd X \in \R^{N \times M}}
\right) \\
&=& \tr ( \bar X^T \dd X ) \;.
\end{eqnarray}
Some well-known results \cite{Giles2007CMDextended,ETP03} of the reverse mode for unary functions $ Y = F(X)$  are
\begin{align} 
Y = X^{-1}     &:& \tr( \bar Y^T \dd Y )    =& \tr ( - Y \bar Y^T Y \dd X )  \label{eqn:matrix_reverse_inverse} \\
Y = X^T        &:& \tr ( \bar Y^T \dd Y )   =& \tr ( \bar Y \dd X ) \\
y = \tr (X) &:& \bar y \dd \tr(X)        =& \tr( \bar y \Id \dd X ) 
\end{align}
For binary functions  $ Z = F(X,Y)$ one obtains $ \tr ( \bar Z^T \dd Z) = \tr ( \bar X^T \dd X) +  \tr ( \bar Y^T \dd Y)$. E.g for the matrix matrix muliplication one has
\begin{align}
Z = X Y        &:& \tr( \bar Z^T \dd Z)       =&   \tr \left( Y \bar Z^T \dd X + \bar Z^T X \dd Y \right) \;.
\end{align}

\section{Preliminaries for the Rectangular $QR$ and Eigenvalue Decomposition}
In this section we establish the notation and derive some basic lemmas that are used in the derivation of the push forward of the rectangular $QR$ and eigenvalue decomposition of symmetric matrices with distinct eigenvalues. Both algorithms have as output special matrices, i.e. the upper tridiagonal matrix $R$ and the diagonal matrix $\Lambda$. We write the algorithms in implicit form for general $\R^{M \times N}$ matrices and enforce their structure by additional equations. E.g. an upper tridiagonal matrix $R$ is an element of $\R^{N \times N}$ satisfying $P_L \circ R = 0$, where the matrix $P_L$ is defined by $(P_L)_{ij} = (i > j)$, i.e. a strictly lower tridiagonal matrix with all ones below the diagonal. The binary operator $\circ$ is the Hadamard product of matrices, i.e. element wise multiplication. We define $ \sum_{d=0}^\infty x_d t^D \eqD \sum_{d=0}^\infty y_d t^D$ iff $x_d = y_d$ for $d=0,\dots,D-1$.
\begin{lemma}\label{lemma_transpose_projector}
Let $A \in \R^{N \times N}$ and $P_L$ resp. $P_R$ defined as above. Then
\begin{eqnarray}
(P_L \circ A)^T &=& P_R \circ A^T \;.
\end{eqnarray}
\end{lemma}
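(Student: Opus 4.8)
The plan is to verify the identity entry by entry, using the definitions of the transpose, of the Hadamard product, and of the two masking matrices. Recall that $(P_L)_{ij} = (i > j)$ is the strictly lower triangular all-ones matrix; correspondingly $P_R$ is the strictly upper triangular all-ones matrix with $(P_R)_{ij} = (i < j)$, so that in particular $P_R = P_L^T$.

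First I would write out the $(i,j)$ entry of the left-hand side. Since transposition swaps indices, $\left( (P_L \circ A)^T \right)_{ij} = (P_L \circ A)_{ji}$, and since the Hadamard product acts entry-wise this equals $(P_L)_{ji} A_{ji} = (j > i)\, A_{ji}$. Next I would compute the $(i,j)$ entry of the right-hand side in the same manner: $(P_R \circ A^T)_{ij} = (P_R)_{ij} (A^T)_{ij} = (i < j)\, A_{ji}$. Comparing the two expressions, the claim reduces to the elementary logical equivalence $(j > i) = (i < j)$, which holds for every pair of indices. Hence both matrices agree in each entry, establishing $(P_L \circ A)^T = P_R \circ A^T$.

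There is no genuine obstacle here; the only point requiring care is the index bookkeeping when one combines the transpose with the Hadamard product, and in particular keeping track of which mask is evaluated at the swapped indices. The conceptual content is simply that transposing a matrix interchanges its strictly lower and strictly upper triangular parts. Indeed, an even shorter argument is available: the transpose commutes with the Hadamard product, $(X \circ Y)^T = X^T \circ Y^T$, so that $(P_L \circ A)^T = P_L^T \circ A^T = P_R \circ A^T$, where the last equality is exactly the relation $P_L^T = P_R$ noted above. I would likely present the entry-wise version for completeness and remark on this slicker alternative.
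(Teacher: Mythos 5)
Your proof is correct and takes essentially the same route as the paper, which likewise verifies the identity entry by entry via $((P_L \circ A)^T)_{ij} = (j>i)\,A_{ji} = (P_R)_{ij}(A^T)_{ij}$; if anything, your index bookkeeping is more careful than the paper's (whose final line conflates entries with whole matrices). The closing remark that $(X \circ Y)^T = X^T \circ Y^T$ together with $P_L^T = P_R$ gives a one-line proof is a nice addition but not a substantively different argument.
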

\begin{proof}
\begin{eqnarray*}
B_{ij} &:=& (P_L \circ A)_{ij} = A_{ij} (i > j) \\
B_{ij}^T &=& B_{ji} = A_{ji} (j>i) = A_{ij}^T P_R = P_R \circ A
\end{eqnarray*}
\end{proof}

\begin{lemma}\label{lemma_anti_symmetric}
Let $X \in \R^{N \times N}$ be an antisymmetric matrix, i.e. $X^T = - X$ and $P_L$ defined as above. We then can write
\begin{eqnarray}
X &=& P_L \circ X - (P_L \circ X)^T \;.
\end{eqnarray}
\end{lemma}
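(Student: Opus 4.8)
The plan is to reduce the identity to a statement about the diagonal of $X$ by combining the preceding lemma with antisymmetry, so that essentially no entrywise bookkeeping is required.

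First I would apply Lemma \ref{lemma_transpose_projector} to the second term on the right-hand side, writing $(P_L \circ X)^T = P_R \circ X^T$. Since $X$ is antisymmetric we have $X^T = -X$, hence $(P_L \circ X)^T = -\,P_R \circ X$. Substituting this back and using that the Hadamard product is bilinear, the claimed right-hand side collapses to
\begin{equation*}
P_L \circ X - (P_L \circ X)^T = P_L \circ X + P_R \circ X = (P_L + P_R) \circ X \;.
\end{equation*}

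Next I would identify the pattern matrix $P_L + P_R$. By definition $(P_L)_{ij} = (i>j)$ and, consistently with Lemma \ref{lemma_transpose_projector}, $(P_R)_{ij} = (i<j)$, so that $(P_L + P_R)_{ij} = (i \neq j)$: it carries ones in every off-diagonal position and zeros on the diagonal. Consequently $(P_L + P_R) \circ X$ agrees with $X$ in every off-diagonal entry and has all diagonal entries equal to zero.

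It therefore only remains to check the diagonal, and this is the single place where antisymmetry is essential: the relation $X^T = -X$ forces $X_{ii} = -X_{ii}$, hence $X_{ii} = 0$ for every $i$. Thus zeroing the diagonal of $X$ leaves $X$ unchanged, giving $(P_L + P_R) \circ X = X$ and completing the argument. I do not expect a genuine obstacle here; the only subtlety is to make sure $P_R$ is indeed the complementary projector, so that $P_L + P_R$ has exactly the off-diagonal support, and to note that it is the vanishing of the diagonal — rather than any cancellation among the off-diagonal entries — that makes the identity hold.
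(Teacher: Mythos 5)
Your proof is correct and follows essentially the same route as the paper: both arguments rest on the decomposition $X = P_L \circ X + P_R \circ X$ together with Lemma \ref{lemma_transpose_projector} and antisymmetry. If anything, you improve on the paper's one-line chain by making explicit the point it leaves implicit, namely that the equality $X = (P_L + P_R)\circ X$ requires the diagonal of $X$ to vanish, which is exactly what $X^T = -X$ supplies.
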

\begin{proof}
\begin{eqnarray*}
X &=& P_L \circ X + P_R \circ X = P_L \circ X + (P_L \circ X^T)^T = P_L \circ X - (P_L \circ X)^T
\end{eqnarray*}
\end{proof}

\begin{lemma}\label{lemma_abc}
Let $A,B,C \in \R^{M \times N}$. We then have
\begin{eqnarray}
\tr \left( A^T ( B \circ C)  \right) &=& \tr \left( C^T ( B \circ A) \right)
\end{eqnarray}
\end{lemma}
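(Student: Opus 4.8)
The plan is to reduce both sides to a single scalar expression by writing each trace as a sum over matrix entries, using nothing more than the definitions of matrix multiplication, transposition, and the Hadamard product. Since every matrix enters the expression multiplicatively and the asserted identity is symmetric under the interchange $A \leftrightarrow C$, I expect the two sides to coincide termwise, so the result is really a bookkeeping verification rather than an argument with a genuine obstacle.

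First I would expand the left-hand side by writing the trace as a sum of diagonal entries and unrolling the matrix product:
\begin{eqnarray*}
\tr\left( A^T (B \circ C) \right) &=& \sum_{k} \left( A^T (B \circ C) \right)_{kk} = \sum_{j,k} (A^T)_{kj} (B \circ C)_{jk} = \sum_{j,k} A_{jk} B_{jk} C_{jk} \;,
\end{eqnarray*}
where I used $(A^T)_{kj} = A_{jk}$ and $(B \circ C)_{jk} = B_{jk} C_{jk}$ from the definition of the Hadamard product, with $j$ ranging over $1,\dots,M$ and $k$ over $1,\dots,N$.

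Next I would carry out the identical computation for the right-hand side, obtaining $\tr(C^T (B \circ A)) = \sum_{j,k} C_{jk} B_{jk} A_{jk}$. Because multiplication of real scalars is commutative, the summand $A_{jk} B_{jk} C_{jk}$ equals $C_{jk} B_{jk} A_{jk}$ for each pair $(j,k)$, so the two double sums are literally the same and the claimed equality follows.

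The only point that requires any care—and the closest thing to an obstacle—is keeping the transpose and the index ranges straight: I must confirm that the dimensions are compatible ($A^T \in \R^{N \times M}$ times $B \circ C \in \R^{M \times N}$ yields an $N \times N$ matrix whose trace is defined) and that the diagonal entry of the product contracts precisely the index that makes the three Hadamard factors share the same subscripts. Once that alignment is checked, commutativity of the scalar product completes the proof.
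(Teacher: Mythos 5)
Your proof is correct and follows essentially the same route as the paper, which likewise reduces both traces to the single triple sum $\sum_{m,n} A_{mn} B_{mn} C_{mn}$ and invokes commutativity of scalar multiplication. Your version merely spells out the index bookkeeping more carefully than the paper's one-line computation, so nothing further is needed.
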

\begin{proof}
\begin{eqnarray*}
\tr ( A^T (B \circ C)) &=& \sum_{n=1}^N \sum_{m=1}^M A_{nm} B_{nm} C_{nm}= \tr ( C^T ( B \circ A ))
\end{eqnarray*}

\begin{lemma}\label{lemma_anti_diag}
Let $X \in \R^{N \times N}$ be antisymmetric, i.e. $X^T = -X$ and $D \in \R^{N \times N}$ be a diagonal matrix. Then we have
\begin{eqnarray}
0 &=& \Id \circ ( DX -XD)\;.
\end{eqnarray}
\end{lemma}
\begin{proof}
Define $A:= X D$ and $B := DX$, then the elements of $A$ and $B$ are given by $ A_{ik} = \sum_j X_{ij} D_{jk} \delta_{jk} = X_{ij} D_{kk}$ resp. $ B_{ik} = \sum_j D_{ij} \delta_{ij} X_{jk}  = D_{kk} X_{kk}$. Therefore the diagonal elements are $A_{kk}= B_{kk} = X_{kk} D_{kk}$.
\end{proof}

\section{Rectangular $QR$ decomposition}
We derive algorithms for the push forward (Algorithm \ref{push_forward_qr}) and pullback (Algorithm \ref{pullback_qr}) of the $QR$ decomposition $Q,R = \mathrm{qr}(A)$, where $A,Q \in R^{M \times N}$ and $R \in \R^{N \times N}$ for $M \geq N$.

\begin{algorithm}{Push forward of the Rectangular $QR$ decomposition:} \label{push_forward_qr} \\
$ A_d\in \R^{M\times N}$ , $Q_d \in \R^{M \times N}$ and $R_d \in \R^{N \times N}$, $ d=0,\dots,D-1$. We assume $M \geq N$, i.e. $A$ has more rows than columns.
\begin{itemize}
\item given: $\textblue{[A]_{D+E}}$, $1 \leq E \leq D$
\item compute  $\textred{[Q]_{D+E}} = \underbrace{\textred{[Q]_D}}_{\mbox{known}} + \underbrace{\textgreen{[\Delta Q]_E}}_{\mbox{wanted}} t^D$, \quad $\textred{[R]_{D+E}} = \underbrace{\textred{[R]_D}}_{\mbox{known}} + \underbrace{\textgreen{[\Delta R]_E}}_{\mbox{wanted}} t^D$
\begin{itemize}
 \item Step 1: \vspace*{-0.7cm}
\begin{eqnarray}
\; [\Delta F]_E t^D &\stackrel{D+E}{=}&   \textred{[Q]_D [R]_D} - \textblue{[A]_D} \\
\; [\Delta G]_E t^D &\stackrel{D+E}{=}&  \Id - \textred{[Q^T]_D [Q]_D}
\end{eqnarray}
\item Step 2:\vspace*{-0.7cm}
\begin{eqnarray}
\; [H]_E &\stackrel{E}{=}& \textblue{ [\Delta A]_E} - [\Delta F]_E \\
\; [S]_E &\stackrel{E}{=}& -\frac{1}{2} [\Delta G]_E
\end{eqnarray}
\item Step 3:\vspace*{-0.7cm}
\begin{eqnarray}
\; P_L \circ ([X]_E) &\stackrel{E}{=}& P_L \circ ( \textred{ [Q^T]_E} [H]_E \textred{ [R^{-1}]_E} ) - P_L \circ [S]_E \\
\end{eqnarray}
\item Step 4:\vspace*{-0.7cm}
\begin{eqnarray}
[K]_E &\stackrel{E}{=}& [S]_E + [X]_E
\end{eqnarray}
\item Step 5:\vspace*{-0.7cm}
\begin{eqnarray}
\; \textgreen{[\Delta R]_E} &\stackrel{E}{=}& \textred{[Q]_E^T} [H]_E - [K]_E \textred{[R]_E}
\end{eqnarray}
\item Step 6:\vspace*{-0.7cm}
\begin{eqnarray}
\; \textgreen{[\Delta Q]_E} &\stackrel{E}{=}& \left( [H]_E - [Q]_E [\Delta R]_E \right) [R]_E^{-1}
\end{eqnarray}
\end{itemize}
\end{itemize}
\end{algorithm}

\begin{proof}
The starting point is the implicit system
\begin{eqnarray*}
0 & \eqDE & \pfde{A} - \pfde{Q} \pfde{R} \\
0 & \eqDE & \pfde{Q^T} \pfde{Q} - \Id \\
0 & \eqDE & P_L \circ \pfde{R} \;.
\end{eqnarray*}
\begin{enumerate}
 \item To derive the algorithm, this system of equations has to be solved: \vspace*{-0.3cm}
\begin{eqnarray}
0 &\eqDE& \pfe{\Delta A} t^D +  \underbrace{\pfd{A} - \pfd{Q} \pfd{R}}_{ =: - \pfe{\Delta F} t^D}
 - \left(  \pfd{Q} \pfe{\Delta R} + \pfe{\Delta Q} \pfd{R} \right)  t^D \label{qr_1}\\
0 &\eqDE& \underbrace{ \pfd{Q^T} \pfd{Q} - \Id}_{=: - \pfe{\Delta G} t^D} + \left( \pfe{\Delta Q^T} \pfd{Q} + \pfd{Q^T} \pfe{\Delta Q} \right) t^D \label{qr_2}  \\
0 &\eqDE& P_L \circ \pfde{R} \label{qr_3}
\end{eqnarray}
\item Any matrix can be decomposed into a symmetric and an antisymmetric part: $\pfe{Q^T} \pfe{\Delta Q} \eqE \pfe{X} + \pfe{S}$ with $ \pfe{X}^T \eqE - \pfe{X}$ and $\pfe{S}^T \eqE \pfe{S}$. It then follows from \Eqn{qr_2} that $ 0 \eqE - \pfe{\Delta G} + 2 \pfe{S}$, i.e. we have $\pfe{S} \eqE \frac{1}{2} \pfe{\Delta G}$.
\item Insert the above relation into \Eqn{qr_1} and use \Eqn{qr_3} to obtain 
\begin{eqnarray*}
P_L \circ \pfe{X} &\eqE& P_L \circ \left( \pfe{Q^T} \underbrace{\left( \pfe{\Delta A} - \pfe{\Delta F}\right)}_{=:\pfe{H}}  \pfe{R}^{-1} \right) \;,
\end{eqnarray*}
uniquely defining $\pfe{X}$. Therefore, we have
\begin{eqnarray*}
\pfe{K} &:\eqE& \pfe{Q^T} \pfe{\Delta Q} \eqE \pfe{S} + \pfe{X} \;.
\end{eqnarray*}
\item to obtain $\pfe{\Delta R}$ we transform \ref{qr_1} and obtain
\begin{eqnarray*}
\pfe{\Delta R} &\eqE& \pfe{Q^T} \pfe{H} - \pfe{K} \pfe{R} \;.
\end{eqnarray*}
\item finally, transform \Eqn{qr_1} by right multiplication of $\pfe{R}^{-1}$:
\begin{eqnarray*}
\pfe{\Delta Q}  &\eqE& \left( \pfe{H} - \pfe{Q} \pfe{\Delta R} \right) \pfe{R}^{-1}
\end{eqnarray*}

\end{enumerate}
\end{proof}

The pullback transforms a function, here the $QR$ decomposition to a function that computes the adjoint function evaluation. Explicitely  $\pf{\alpha}( [\bar A], [A]) = \pf{\alpha}( [\bar Q], [Q]) + \pf{\alpha}( [\bar R], [R])$. 
\begin{algorithm}
  \label{pullback_qr}
The pullback can be written in a single equation
\begin{eqnarray}
\bar A &=& \bar A + Q \left( \bar R + P_L \circ \left(  R \bar R^T - \bar R R^T  + Q^T \bar Q - \bar Q^T Q \right) R^{-T} \right) + (\bar Q - Q Q^T \bar Q) R^{-T}\;.
\end{eqnarray}
For square $A$, i.e. $A \in R^{N \times N}$ the last term drops out. To lift the pullback one can use the same formula and use $X = [X]_D$ for $X$ either $A,Q,R,\bar A, \bar Q$ or $\bar R $.
\end{algorithm}

\end{proof}
\begin{proof}
We differentiate the implicit system
\begin{eqnarray*}
0 &=& A - Q R \\
0 &=& Q^T Q - \Id \\
0 &=& P_L \circ R
\end{eqnarray*}
and obtain
\begin{eqnarray*}
0 &=& \dd A - \dd Q R - Q \dd R  \quad (*)\\
0 &=& \dd Q^T Q + Q^T \dd Q \quad (**)\;.
\end{eqnarray*}
We define the antisymmetric ``matrix'' $ X := Q^T \dd Q$. Transforming equation $(*)$ as $Q^T (*) R^{-1}$ yields
\begin{eqnarray*}
0 &=& Q^T \dd A R^{-1} - Q^T Q \dd R R^{-1} - Q^T \dd Q \\
\mbox{therefore}\quad P_L \circ X &=& P_L \circ ( Q^T \dd A R^{-1}) \\
\mbox{and thus} \quad X &=& P_L \circ X - ( P_L \circ X)^T \;.
\end{eqnarray*}
Left multiplication $Q^T (*)$ yields $\dd R = Q^T \dd A - X R$ and transformation $(*)R^{-1}$ yields $\dd Q = ( \dd A - Q \dd R) R^{-1}$.
We are now in place to calculate
\begin{eqnarray*}
\tr (\bar Q^T \dd Q) + \tr ( \bar R^T \dd R) &=& \tr( \bar Q^T ( \dd A - Q \dd R) R^{-1} ) + \tr ( \bar R^T \dd R) \\
&=& \tr ( R^{-1} \bar Q^T \dd A) + \tr ( \underbrace{(\bar R^T - R^{-1} \bar Q^T Q)}_{=:F} \dd R) \\
&=&  \tr ( R^{-1} \bar Q^T \dd A) + \tr ( F ( Q^T \dd A - X R)) \\
&=& \tr (( R^{-1} \bar Q^T + F Q^T) \dd A) + \tr ( - R F X) \\
&=& \tr (( R^{-1} \bar Q^T + F Q^T) \dd A) + \tr ( - R F (P_L \circ X - (P_L \circ X)^T )) \\
&=& \tr (( R^{-1} \bar Q^T + F Q^T) \dd A) + \tr ( - R F (P_L \circ Q^T \dd A R^{-1} - (P_L \circ Q^T \dd A R^{-1})^T )) \\
&=& \tr (( \bar Q R^{-T} + Q F^T) \dd A^T) + \tr ( R^{-T} \dd A^T Q  (P_L \circ( RF - F^T R^T )) \\
&=& \tr ( (Q F^T + \bar Q R^{-T} + Q P_L ( R F - F^T R^T) R^{-T}) \dd A^T) \\
\mbox{therefore} \quad \bar A &=& \bar A + Q \left( \bar R + P_L \circ ( Q^T \bar Q - \bar Q^T Q + R \bar R^T - \bar R R^T) R^{-T} \right) + ( \bar Q - Q Q^T \bar Q) R^{-T} \;.
\end{eqnarray*}
In the above derivation we have used  Lemmas \ref{lemma_anti_symmetric}, \ref{lemma_transpose_projector} and \ref{lemma_abc}.

\end{proof}

\section{Eigenvalue Decomposition of Symmetric Matrices with Distinct Eigenvalues}
We compute for $A \in R^{N \times N}$ a symmetric matrix with distinct eigenvalues the eigenvalue decomposition $A Q = Q \Lambda$, where $\Lambda \in \R^{N \times N}$ is a diagonal matrix and $Q \in \R^{N \times N}$ an orthonormal matrix.

\begin{algorithm}{Push Forward of the Symmetric Eigenvalue Decomposition with Distinct Eigenvalues:}\\
We assume  $\textred{[Q]_{D+E}} = \textred{[Q]_D} +\textgreen{[\Delta Q]_E} t^D$ and $\textred{[\Lambda]_{D+E}} = \textred{[\Lambda]_D} +\textgreen{[\Delta \Lambda]_E} t^D$, where $\QD$ and $\LD$ have already been computed. I.e., it is the goal to compute the next $E$ coefficients  $\DQE$ and $\DLE$.
\begin{itemize}
 \item Step 1:
 \vspace*{-0.7cm}
\begin{eqnarray}
\; \DFE t^D &\stackrel{D+E}{=}&  \QTD \AD \QD - \LD  \\
\; \DGE t^D &\stackrel{D+E}{=}&  \QTD \QD - \Id  
\end{eqnarray}
 \item Step 2:
 \vspace*{-0.7cm}
\begin{eqnarray}
\; [S]_E  &\stackrel{E}{=}&  - \frac{1}{2} \DGE 
\end{eqnarray}
 \item Step 3:
 \vspace*{-0.7cm}
\begin{eqnarray}
\; [K]_E  &\stackrel{E}{=}&  \DFE + \QTE \DAE \QE + [S]_E \LE + \LE [S]_E
\end{eqnarray}
 \item Step 4:
 \vspace*{-0.7cm}
\begin{eqnarray}
\; \DLE  &\stackrel{E}{=}&  \Id \circ [K]_E
\end{eqnarray}
 \item Step 5:
 \vspace*{-0.7cm}
\begin{eqnarray}
\; [H_{ij}]_E &\stackrel{E}{=}& ([\lambda_j]_E - [\lambda_i]_E)^{-1} \quad \mbox{if} \quad  i \neq j , \quad 0 \quad \mbox{ else}
\end{eqnarray}
 \item Step 6:
 \vspace*{-0.7cm}
\begin{eqnarray}
\; \DQE &\stackrel{E}{=}& \QE \left( [H]_E \circ ( [K]_E - \DLE) + [S]_E \right)
\end{eqnarray}
\end{itemize}
\end{algorithm}
\begin{proof}
We solve the implicit system
\begin{eqnarray*}
0 &\eqDE& \QTDE \ADE \QDE  - \LDE  \\ 
0 &\eqDE& \QTDE \QDE - \Id \\
0 &\eqDE& (P_L + P_R) \circ \LDE \;.
\end{eqnarray*}
We split $\pfde{A} = \pfd{A} + \pfe{\Delta A}t^D$ in the known part $\pfd{A}$ and unknown part $\pfe{\Delta A}$. For ease of notation we use $A \equiv \pfd{A}$  and $\Delta A \equiv \pfe{\Delta A}$ in the following.
The first equation can be transformed as follows:
\begin{eqnarray*}
0 & \eqDE & \pfde{Q^T} \pfde{A} \pfde{Q}  - \pfde{\Lambda} \\
 &  \eqDE & (Q^T + \Delta Q^T t^D) (A + \Delta A t^D)   (Q + \Delta Q t^D) - ( \Lambda + \Delta \Lambda t^D) \\
 &  \eqDE & \underbrace{Q^T A Q - \Lambda}_{ =: \Delta F t^D} + \left( Q^T A \Delta Q + Q^T \Delta A Q + \Delta Q^T A Q - \Delta \Lambda \right) t^D
\end{eqnarray*}

From the second equation we get 
\begin{eqnarray*}
\Delta G & \eqE & Q^T \Delta Q + \Delta Q^T Q \;,
\end{eqnarray*}
which can be written as 
\begin{eqnarray*}
S + X &\eqE & \Delta Q^T Q \;,
\end{eqnarray*}
where $S$ is symmetric and $X$ is antisymmetric.

Using $ A Q - Q \Lambda$ we get
\begin{eqnarray*}
0 & \eqE & \Delta F - \Delta \Lambda  + Q^T \Delta A Q + \Lambda Q^T \Delta Q  + \Delta Q^T Q \Lambda \\
\Delta \Lambda  & \eqE & \Delta F  + Q^T \Delta A Q + \Lambda ( S - X)  + ( S + X) \Lambda \\
\end{eqnarray*}
Using the structural information that $\Lambda$ is a diagonal matrix we obtain
\begin{eqnarray*}
\Delta \Lambda  & \eqE & \Id \circ \left( \Delta F  + Q^T \Delta A Q + \Lambda ( S - X)  + ( S + X) \Lambda \right)\\
& \eqE & \Id \circ \underbrace{\left( \Delta F  + Q^T \Delta A Q + \Lambda S  + S \Lambda \right)}_{=:K} \;,
\end{eqnarray*}
since $ \Id \circ (X \Lambda - \Lambda X) = 0$ from Lemma \ref{lemma_anti_diag}. Now that we have $\Delta \Lambda$ we can compute $\Delta Q$:
\begin{eqnarray*}
0 & \eqE &  K - \Delta \Lambda + X \Lambda -\Lambda X \\
0 & \eqE &  K - \Delta \Lambda + E \circ X \\
\end{eqnarray*}
where $E_{ij} = \Lambda_{jj} - \Lambda_{ii}$. Define $H_{ij} = ( \Lambda_{jj} - \Lambda_{ii} )^{-1}$ for $i\neq j$ and $0$ else:
\begin{eqnarray*}
X^T & \eqE & H \circ \left( K - \Delta \Lambda \right) \;.
\end{eqnarray*}
Using $X^T \eqE Q^T \Delta Q - S$ we obtain
\begin{eqnarray*}
\Delta Q & \eqE &  Q \left( H \circ \left( K - \Delta \Lambda \right) + S  \right) \;.
\end{eqnarray*}
This concludes the proof.
\end{proof}

We now derive the pullback formulas for the eigenvalue decomposition. The pullback acts as $\overleftarrow{P}: (A \mapsto Q, \Lambda) \mapsto  ((\bar Q, \Lambda , Q, \Lambda, A) \mapsto \bar A)$

\begin{algorithm}{Pullback of the Symmetric Eigenvalue Decomposition with Distinct Eigenvalues:}\\
Given $\textred{A}, \textred{Q}, \textred{\Lambda}, \textred{\bar Q}, \textred{\bar \Lambda}$, compute $\textgreen{\bar A}$
\begin{eqnarray}
\; [H_{ij}]_D &\stackrel{D}{=}& (\textred{[\lambda_j]_D} - \textred{[\lambda_i]_D})^{-1} \quad \mbox{if} \quad  i \neq j , \quad 0 \quad \mbox{ else} \\
\textgreen{[\bar A]_D} &\stackrel{D}{=}& \textred{[Q]_D} \left( \textred{[\bar \Lambda]_D} + [H]_D \circ (\textred{[Q^T]_D [\bar Q]_D})\right) \textred{[Q^T]_D}
\end{eqnarray}
\end{algorithm}

\begin{proof}
We want to compute $ \tr( \bar A^T \dd A) = \tr( \bar \Lambda^T \dd \Lambda) + \tr( \bar Q^T \dd Q)$. We differentiate the implicit system
\begin{eqnarray*}
0 &=& Q^T A Q - \Lambda \\
0 &=& Q^T Q - \Id\\
0 &=& (P_L + P_R) \circ \Lambda
\end{eqnarray*}
and obtain
\begin{eqnarray*}
\dd \Lambda &=& \dd Q^T A Q + Q^T \dd A Q + Q^T A \dd Q \\
0 &=& \dd Q^T Q + Q^T \dd Q \;.
\end{eqnarray*}
A straight forward calculation shows:
\begin{eqnarray*}
\tr ( \bar \Lambda^T \dd \Lambda) &=& \tr( A Q \bar \Lambda^T \dd Q^T) + \tr( \Lambda Q^T A \dd Q) + \tr ( Q \bar \Lambda Q^T \dd A) \\
&=& \tr ( Q^T A Q \bar \Lambda^T \dd Q^T Q) + \tr ( \bar \Lambda Q^T A Q Q^T \dd Q)  + \tr ( Q \bar \Lambda Q^T \dd A) \\
&=& \tr ( \Lambda \bar \Lambda \dd Q^T Q ) + \tr( \bar \Lambda \Lambda Q^T \dd Q) + \tr ( Q \bar \Lambda Q^T \dd A) \\
&=& \tr ( Q \bar \Lambda Q^T \dd A) \;, \\
\tr( \bar Q^T \dd Q) &=& ( \bar Q^T Q Q^T \dd Q) \\
&=& \tr (\bar Q^T Q (H \circ (Q^T \dd A Q))) \\
&=& \tr ( Q^T \dd A^T Q ( H \circ (Q^T \bar Q))) \\
&=& \tr ( Q ( H^T \circ (\bar Q^T Q)) Q^T \dd A) \;, \\
\tr ( \bar A^T \dd  A) &=& \tr \left( ( Q (\bar \Lambda + H^T \circ ( \bar Q^T Q)) Q^T )\dd A \right)
\end{eqnarray*}
where we have used
\begin{eqnarray*}
0 &=& AQ - Q \Lambda \\
\Rightarrow 0 &=& \dd A Q + A \dd Q - \dd Q \Lambda - Q \dd \Lambda \\
&=& \dd A Q + Q Q^T A Q Q^T \dd Q - Q Q^T \dd Q \Lambda - Q \dd \Lambda \\
&=& \dd A Q - Q ( K \circ (Q^T \dd Q)) - Q \dd \Lambda \\
Q^T \dd Q &=& H \circ ( Q^T \dd A Q - \dd \Lambda) \\
&=& H \circ ( Q^T \dd A Q)
\end{eqnarray*}
where we have defined $K_{ij} := \Lambda_{jj} - \Lambda_{ii}$ and $H_{ij} = (K_{ij})^{-1}$ for $i \neq j$ and $H_{ij} = 0$ otherwise and used the property $ \Lambda X - X \Lambda = K \circ X$ with $K_{ij} = \Lambda_{jj} - \Lambda_{ii}$ for all $X \in \R^{N \times N}$ and diagonal $\Lambda \in \R^{N \times N}$.
\end{proof}

\section{Software Implementation}
The algorithms presented here are implemented in the Python AD tool \texttt{ALGOPY} \cite{ALGOPY}. It is BSD licensed and can be publicly accessed at \texttt{www.github.com/b45ch1/algopy}. The goal is to provide code that is useful not only for end users but also serving as repository of tested algorithms that can be easily ported to other programming languages or incorporated in other AD tools. The focus is on the implementation of polynomial factor rings and not so much on the efficient implementation of the global derivative accumulation. The global derivative accumulation is implemented by use of a code tracer like ADOL-C \cite{Griewank1999ACA} but stores the computational procedure not on a sequential tape but in computational graph where each function node also knows it's parents, i.e. the directed acyclic graph is stored in a doubly linked list.

 At the moment, prototypes for univariate Taylor propagation of scalars, cross Taylor propagation of scalars and univariate Taylor propagation of matrices are implemented. It is still in a pre-alpha stage and the API is very likely to change.

\section{Preliminary Runtime Comparison}
This section gives a rough overview of the runtime behavior of the algorithms derived in this paper compared to the alternative approach of differentiating the linear algebra routines. We have implemented a $QR$ decomposition algorithm that can be traced with \texttt{PYADOLC}. The results are shown and interpreted in Figure \ref{fig:algopy_vs_pyadolc}. In another test we measure the ratio between the push forward runtime and the normal function evaluation runtime:
\begin{eqnarray*}
\mbox{TIME(push forward)}/\mbox{TIME(normal)} \approx  11.79
\end{eqnarray*}
 for the $QR$ decomposition for $A \in \R^{100 \times 5}$ up to degree $D=4$ and five parallel evaluations at once.
For the eigenvalue decomposition we obtain 
\begin{eqnarray*}
\mbox{TIME(push forward)}/\mbox{TIME(normal)} \approx  11.88
\end{eqnarray*}
for $A \in \R^{20 \times 20}$, $D=4$ and five parallel evaluations. This test is part of \texttt{ALGOPY} \cite{ALGOPY}.

\begin{figure}
\centering
\includegraphics[width=0.49\textwidth]{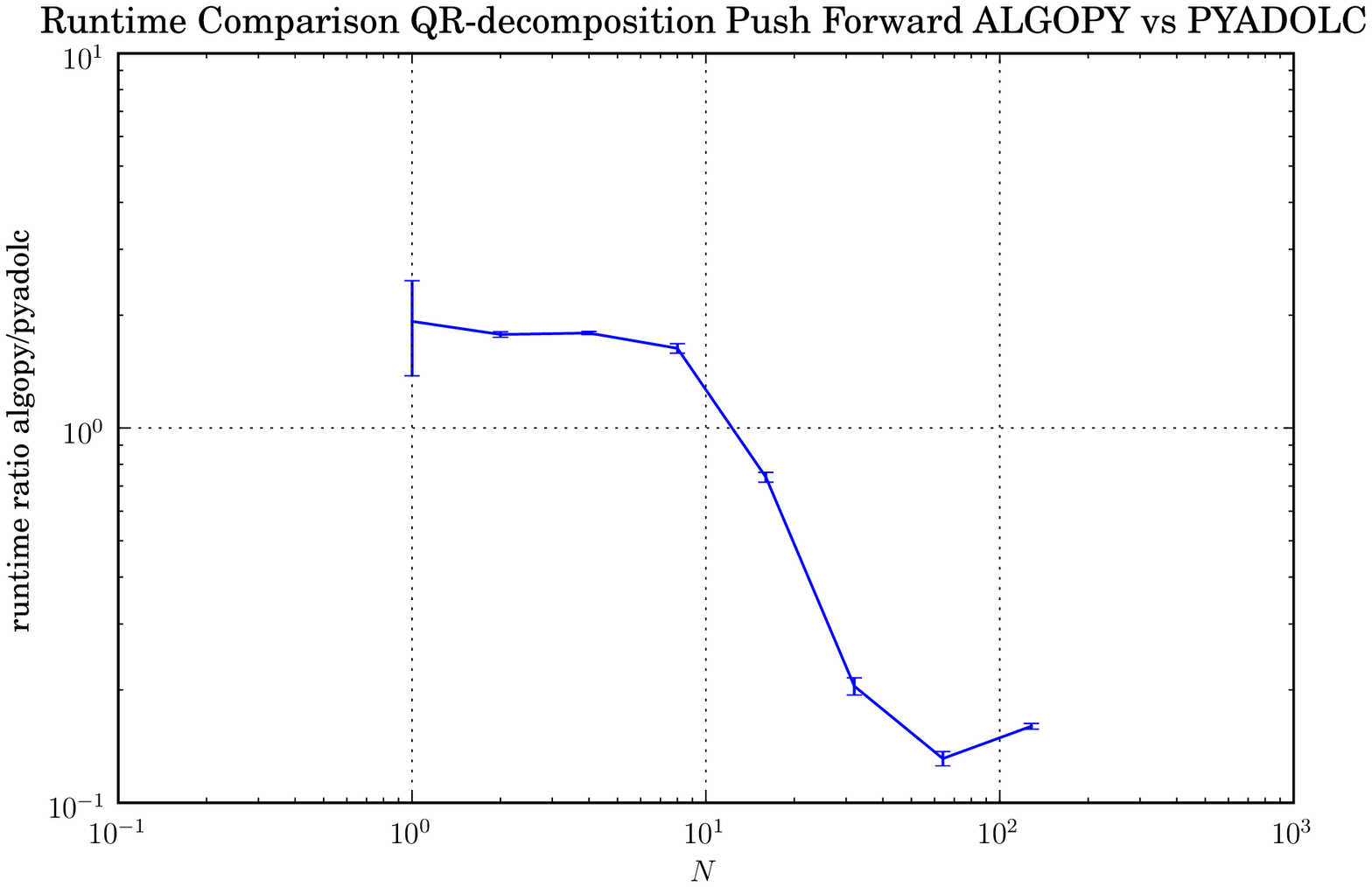}
\includegraphics[width=0.49\textwidth]{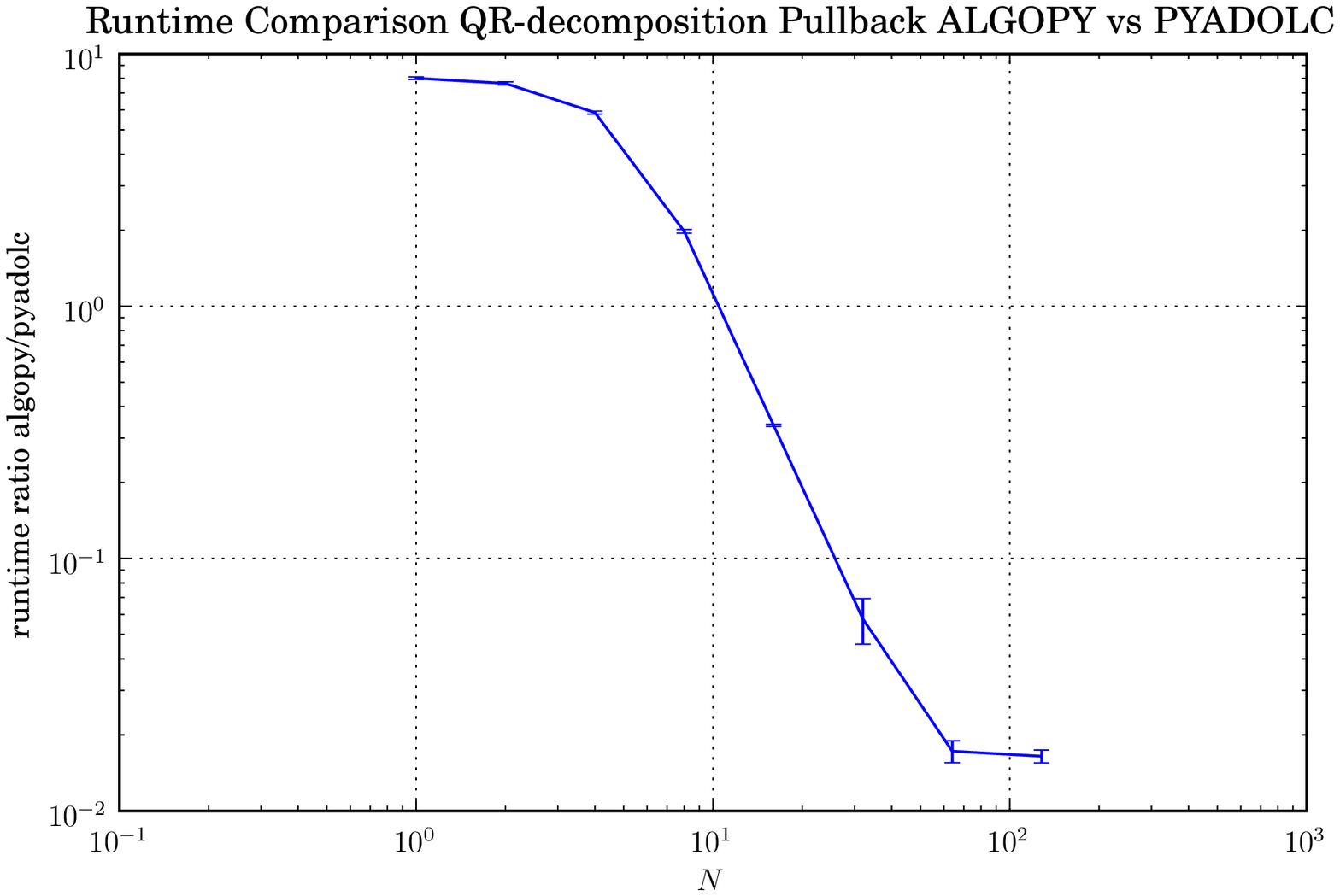}
\caption{\label{fig:algopy_vs_pyadolc}
In the left plot the comparison between the push forward of \texttt{PYADOLC} and \texttt{ALGOPY} for $D=4$. In the right plot the lifted pullback for $D=4$. On the $x$-axis we plot the size $N$ of the test matrices $A \in \R^{N \times N}$ and on the $y$-axis the runtime ratio. Unfortunately, there are significant fluctuation in the relative runtime measurements. Therefore, we have repeated each test 10 times and plotted mean and standard deviation.  Nonetheless it is obvious that for large matrices the UTPM implementation in \texttt{ALGOPY} clearly outperforms the UTPS differentiated algorithm using \texttt{PYADOLC}. We must stress that the plots only indicate the actual runtime ratio that would be obtained by efficient FORTRAN/C/C++ implementation of both methods. There are also many possibilities to improve the performance of \texttt{PYADOLC}, e.g. by adjusting the buffer sizes of ADOL-C or by using direct LAPACK calls instead of the standard \texttt{numpy.dot} function in \texttt{ALGOPY}.
 }
\end{figure}

\section{Example Program: Gradient Evaluation of an Optimum Experimental Design objective function}
The purpose of this section is to show a motivating example from optimum experimental design where these algorithms are necessary to compute the gradient of the objective function $\Phi$ in a numerically stable way.
The algorithmic procedure to compute $\Phi$ is given as a straight-line program
\begin{eqnarray*}
F &=& F(x,y) \\
J &=& \frac{\dd F}{\dd y} \\
Q,R &=& \qr(J) \\
D &=& \solve(R,\Id) \\
C &=&  D D^T \\
\Lambda, U &=& \eig(C) \\
\Phi &=& \Lambda_{11} \\
\end{eqnarray*}
where $F \in \R^{N_m}$, $\gamma \in \R$, $J \in \R^{N_m \times N_y}$.
The $QR$ decomposition is used for numerical stability reasons since otherwise the multiplication of $J^T J$ would square the condition number.

To be able to check the correctness of the computed gradient we use a simple $F(x,y)$ that allows us to derive an analytical solution by symbolic differentiation. We use $F(x,y) = B x y $ where $B \in \R^{N_m \times N_x}$ is a randomly initialized matrix, $x \in \R^{N_x}$ and $y \in \R$. Thus, the objective function is $\Phi(y) = y^{-2} \lambda_1( (B^T B)^{-1}$ and thus $\nabla_y \Phi(y) = -2 y^{-3} \lambda_1( (B^T B)^{-1}$. We use $N_x = 11$. A typical test run where the symbolical solution and the AD solution are compared yields
\begin{eqnarray*}
 |(\nabla_y \Phi)^{\mbox{symbolic}} - (\nabla_y \Phi)^{\mbox{AD}}  | &=& 4.4 \times 10^{-15} \;.
\end{eqnarray*}
This example is part of \texttt{ALGOPY} \cite{ALGOPY}.

\bibliographystyle{plain}
\bibliography{refs}

\end{document}